 \newtheorem{theo}{Theorem}[section]
  \newtheorem{theorem}[theo]{Theorem}
 \newtheorem{prop}[theo]{Proposition}
 \newtheorem{proposition}[theo]{Proposition}
 \newtheorem{defi}[theo]{Definition}
 \newtheorem{definition}[theo]{Definition}
 \newtheorem{lemma}[theo]{Lemma}
\theoremstyle{plain} \newtheorem{notation}[theo]{Notation}
\theoremstyle{plain} \newtheorem{remark}[theo]{Remark}
\theoremstyle{plain} \newtheorem{problem}{Problem}[section]
\newcommand{\K}{\mathbb{K}}
\newcommand{\priv}[1]{\backslash\{ #1 \}}
\newcommand{\intvl}[1]{\llbracket #1 \rrbracket}
\newcommand{\eqdef}{\stackrel{\text{def}}{=}}
\newcommand{\Fq}{\mathbb{F}_q}
\newcommand{\fq}{\Fq}
\newcommand{\Fqm}{\mathbb{F}_{q^m}}
\newcommand{\fqm}{\Fqm}
\newcommand{\CG}[2]{\begin{bmatrix}#1 \\ #2\end{bmatrix}_q}
\newcommand{\OO}[1]{\mathcal{O}\big( #1 \big)}
\newcommand{\C}{\mathcal{C}}
\newcommand{\Cfo}[2]{\overline{#1}^{#2}}
\newcommand{\Cp}[1]{\overline{\C}^{#1}}
\newcommand{\R}{\mathcal{R}}
\newcommand{\word}[1]{\ensuremath{\boldsymbol{#1}}}
\newcommand{\cv}{\word{c}}
\newcommand{\ev}{\word{e}}
\newcommand{\fv}{\word{f}}
\newcommand{\vv}{\word{v}}
\newcommand{\xv}{\word{x}}
\newcommand{\Av}{\word{A}}
\newcommand{\Bv}{\word{B}}
\newcommand{\Cv}{\word{C}}
\newcommand{\Gv}{\word{G}}
\newcommand{\Hv}{\word{H}}
\newcommand{\Mv}{\word{M}}
\newcommand{\Pv}{\word{P}}
\newcommand{\Qv}{\word{Q}}
\newcommand{\Rv}{\word{R}}
\DeclareMathOperator{\Rank}{Rank}
\DeclareMathOperator{\rank}{Rank}
\DeclareMathOperator{\prob}{Prob}
\title{New algorithms for decoding in the rank metric and an 
attack on the LRPC cryptosystem}
\author{%
{Adrien Hauteville{\small $~^{\#*1}$}, Jean-Pierre Tillich{\small $~^{*2}$} }%
\vspace{0.6mm}\\
\fontsize{10}{10}\selectfont\itshape
$^{\#}$Universit\'e de Limoges, XLIM-DMI
123, Av. Albert Thomas, 87060 Limoges, Cedex, France\\
\fontsize{9}{9}\selectfont\ttfamily\upshape
\fontsize{10}{10}\selectfont\rmfamily\itshape
$^{*}$\,Inria, Domaine de Voluceau,
BP 105, Le Chesnay 78153, France\\
\fontsize{9}{9}\selectfont\ttfamily\upshape
%
$^{1}$\,adrien.hauteville@etu.unilim.fr\\
$^{2}$\,jean-pierre.tillich@inria.fr%
}
\begin{document}
\maketitle
\newpage
\begin{abstract}
We consider the decoding problem or the problem of finding low weight codewords for rank metric codes.
We  show how additional information 
about the codeword we want to find under the form of certain linear combinations of the entries of the
codeword leads to algorithms with a better complexity. This is then used together with a folding technique for attacking 
a McEliece scheme based on LRPC codes. It leads to a feasible attack on one of the parameters
suggested in \cite{GMRZ13}.
\end{abstract}
\section{Introduction}

{\em McEliece schemes.}
The hardness of the problem of decoding a linear code makes its use very attractive in the 
cryptographic setting. Indeed it has been proven to be
NP-complete  for the Hamming metric in the seminal paper of Berlekamp, McEliece and van Tilborg \cite{BMT78}. Moreover, despite some significant research efforts, only 
exponential algorithms  are known for it and the exponent has decreased only 
very slowly over time \cite{BJMM12}. One 
of the very first public-key cryptosystem   \cite{M78} is actually (partly) based on  this problem.
 It still belongs to the very few public key cryptosystems which remain 
unbroken today. 

One of the drawbacks of this scheme is its large public key size. It relies on a particular code family, 
namely Goppa codes,
which in many respects look like  random linear codes but still have an efficient decoding algorithm.
 Since then, many approaches have been tried
to reduce the key size: (i) alternative code families have been proposed,
(ii) using codes with a large automorphism group such as quasi-cyclic codes,
(iii) changing the metric used for the code and the code itself. 

{\em McEliece schemes based on rank metric codes.} In this paper we focus on a proposal which is a mixture of the approaches (ii) and (iii): the LRPC scheme of
\cite{GMRZ13}. It relies on a new family of codes, called Low Rank Parity Check (LRPC in short) codes
which are devised  for the rank metric. The first McEliece scheme based on rank metric codes 
was the Gabidulin-Paramonov-Tretjakov cryptosystem \cite{GPT91}. It relies on an analogue of Reed-Solomon codes for the 
rank metric, the  ``Gabidulin codes''.
The scheme got broken by Overbeck in \cite{O05}. One of the main reasons for 
its insecurity
can be traced back to its  rich algebraic structure. This is not the case for the LRPC scheme. 
 For this family of codes, like for the MDPC codes based McEliece scheme of \cite{MTSB13},
it seems that key security and message security really rely on the same problem, namely 
finding a low rank weight (or moderate Hamming weight for \cite{MTSB13}) codeword  in a linear code
with no structure.

{\em Decoding for the rank metric.} It is essential with this approach to have a good assessment of the complexity 
of solving the decoding problem in the rank metric. Recall that in Delsarte's language \cite{D78}, linear rank metric codes are viewed as the subspace generated by a set of matrices of a same size over some finite field
$\fq$. 

The associated decoding problem is also known 
 under the name "MinRank" and is known to be NP-complete \cite{BFS99}.
 Generally such codes arise in the form of linear codes defined over some extension field $\fqm$.

This problem has attracted some attention in the cryptographic community and algorithms of
exponential complexity have been devised for it \cite{CS96,OJ02,GRS13,KS99,FLP08}.
 
{\em Attacks on quasi-cyclic codes by folding the code.} The parameters of the LRPC scheme have been devised in order to be 
safe against the aforementioned algorithms for decoding in the rank metric. However, the authors
of the scheme have also used quasi-cyclic versions of such codes in order to reduce further the 
size of the parameters. It has been found out recently \cite{FOPPT14a,FOPPT15} that McEliece versions
based on quasi-cylic or quasi-monoidic codes can be attacked by reducing the size of the code
by adding coordinates which belong to the same orbit of the automorphism group. This is called 
the ``folding'' process in these papers. When this process is applied to 
quasi-cyclic or quasi-dyadic alternant or Goppa codes suggested in the cryptographic
community, this results in a much smaller alternant or Goppa code and this can be used to mount a key recovery attack.  This approach was further investigated and the 
folding process was generalized by using a polynomial formalism in \cite{L14}. It was shown there that this 
approach can be used for the quasi-cyclic  LRPC codes of \cite{GMRZ13} and gives  a LRPC
code of much smaller size but which still has in its dual low weight codewords. The decoding algorithm
 of \cite{GRS13} can then be used to find these low weight codewords in a more efficient way than for the
original code. This results in a multiplicative gain in the complexity of the attack of order $2^{12}$ for 
one of the parameters proposed in \cite{GMRZ13}.

{\em Our contribution.} Our contribution in the paper is threefold. First we show how certain rank decoding algorithms   of \cite{OJ02,GRS13} may benefit from some partial knowledge on the codeword which is sought.
We consider here that we are given certain linear combinations of the entries of the
codeword. This generalizes the $\fqm$ linear case where a certain entry can be assumed to be equal to $1$.
Roughly speaking, when we search in the latter case for a rank weight $w$ codeword using the algorithm 
of \cite{OJ02,GRS13} we have algorithms of complexity $q^{(w-1) \alpha}$ where $\alpha$ is some
quantity that depends on the algorithm which is considered and some code parameters. We show how 
the complexity of these algorithms can be reduced to  $q^{(w-a) \alpha}$ when we know $a$ independent
linear combinations of the code positions. We also obtain by the approach of \cite{GRS13} applied to the 
transposed code an algorithm with the same complexity as \cite{OJ02} but which 
is  significantly simpler. Finally, we show that when the folding process
is applied to the quasi-cyclic $\fqm$ linear codes considered in \cite{GMRZ13} we know two independent linear combinations 
of the codeword we are looking for, instead of just one. This is then used together with the generalized folding 
process of \cite{L14} to give a much more efficient attack than  in \cite{L14}.

\section{Generalities about rank metric codes}\label{sec:generalities}

Let us start with the definition of a matrix code
\begin{definition}[Matrix code]
A matrix code of size $m\times n$ over $\Fq$ is a linear code generated by matrices of size $m \times n$ over
$\Fq$. When the code is of dimension $K$ we say that it is an $[m \times n,K]$ matrix code over $\Fq$.
\end{definition}

\begin{remark}
It will be convenient  to express $K$ under the form $K=k.m$. Notice that 
$k$ is not necessarily an integer.
\end{remark}

It might be thought that this is nothing but a linear code of length $m.n$. The point of this definition is that we
equip such codes with the rank metric that is defined by
 $d(\Av,\Bv) = \Rank(\Av-\Bv)$. The weight $|\cv|$ of a word $\cv$ is taken with respect to the rank, that is $|\cv|
\eqdef d(\cv,0) = \Rank(\cv)$.
Generally such codes are obtained from $\Fqm$ linear codes as follows
\begin{definition}[Matrix code associated to an $\Fqm$ linear code]
Let $\C$ be an $[n,k]$ linear code over $\Fqm$ and let $(\beta_1\dots \beta_m)$ be a basis of $\Fqm$ over $\Fq$. Each word $\cv \in \C$ can be represented by an $m\times n$ matrix $\Mv(\cv) = (M_{ij})_{\substack{1 \leq i \leq m\\ 1 \leq j \leq n}}$ over $\Fq$, with $c_j = \sum_{i=1}^m M_{ij} \beta_i$.
The set $\{\Mv(\cv),\cv \in \C\}$ is  the matrix code associated to the $\Fqm$ linear code $\C$.
It is of type $[m\times n,k.m]$.
\end{definition}

This definition depends of course of the basis chosen for $\Fqm$. However changing the basis does not change 
the distance between codewords. The point of defining matrix codes in this way is that they have a more compact 
description. It is readily seen that an $[m\times n,k.m]$ matrix code can be specified from a systematic generator matrix by  
$k(n-k)m^2 \log_2 q$ bits
whereas a  $\Fqm$-linear code uses only $k(n-k)\log_2 q^m=k(n-k)m \log_2 q$ bits. This is particularly interesting for cryptographic applications where this notion is directly related to the
public key size. 
We can now define  the two central problem in this field, namely
\begin{problem}[Decoding in the rank metric] For a given matrix code $\C$ of type $[m \times n,K]$ over 
$\Fq$, a matrix $\Av$ in $\Fq^{m \times n}$ and an integer $w$, find a codeword  $\cv$ in $\C$ such that
$\Rank(\Av - \cv)=w$.
\end{problem}

\begin{problem}[Low rank codeword problem]
For a given matrix code $\C$ and an integer $w$,
find a codeword $\cv$ of rank weight $w$ in  $\C$.
\end{problem}

The decoding problem  reduces to the low rank codeword problem by finding a codeword of weight $w$ in the matrix code
$\C'$ where $\C'$ is generated by the codewords of $\Cv$ and $\Av$ when $\C$ does not contain
codewords of rank weight $w$. In other words, decoding an error of weight $w$ in an $[m \times n,K]$ matrix code
reduces to the problem of finding a codeword of weight $w$ in an 
$[m \times n,K+1]$ matrix code. Notice that the low rank codeword problem is slightly simpler for 
matrix codes obtained from $\Fqm$ linear codes.  Indeed, we may assume that the 
codeword $\cv$ of weight $w$ contains a coordinate equal to $1$. This follows from the fact that  multiplying $\cv$ by any 
nonzero element of $\Fqm$ does not change the rank of the associated matrix. In other words, we 
have some additional knowledge about the codeword (or the error) of weight $w$ in this case. Notice that the support trapping decoding algorithm
(see next section) of \cite{GRS13} and the decoding algorithm of \cite{OJ02} given for $\Fqm$ both 
exploit this knowledge. They have an asymptotic exponential complexity of the form 
$q^{\alpha(w-1)}$ whereas it would have been only $q^{\alpha w}$ for an unstructured 
matrix code with the same parameters.

\section{A support trapping decoding algorithm }\label{sec:basics}

\cite{GRS13} has introduced a very neat and simple algorithm for decoding in the rank metric. It can be considered
as a support trapping decoding algorithm for an $[m \times n,k.m]$
 matrix code that tries to guess a subspace $F$ of the column space $\Fq^m$ of $m \times n$ 
 matrices over $\Fq$ that contains the column space $E$ of the error $\ev$ we want to find.
 Since we focus on the low-weight finding problem in this article we will explain 
 this algorithm in the case we look for a codeword of weight $w$ in an $[m \times n,K]$ code.
 In this case, $E$ is the column space of $\cv$.
 The next step is then to express the columns $\cv_i$ of $\cv$ in a basis $\fv_1,\dots, \fv_r$ of $F$, that is
$
 \cv_i = \sum_{j=1}^r x_{ij} \fv_j.
 $
 This gives $n.r$ unknowns (the $x_{ij}$'s). From a parity-check matrix of the matrix code we deduce
 $n.m-k.m=(n-k)m$ equations involving the entries of $\cv$ that can all be expressed in terms of the $x_{ij}$'s.
 In other words, we have a linear system with $(n-k)m$ equations and $n.r$ unknowns. We choose
 $r$ to be  the least integer such that the number of unknowns is less than the number of equations. In our case, $r = m - \big \lceil \frac{km}{n} \big\rceil$.

The complexity of the algorithm depends on the probability of having $E \subset F$. It is equal to the number of subspaces of dimension $w$ in a subspace of dimension $r$, divided by the number of all subspaces of dimension $w$ in $\Fq^m$. This probability can be easily expressed with Gaussian coefficients, which counts the number of subspaces of a vector-space :
\begin{equation}\label{eq:proba}
 p = \frac{\CG{r}{w}}{\CG{m}{w}} = \Theta\left( q^{-w(m-r)} \right)
\end{equation}
We use here the following notation.
\begin{notation}
$\CG{m}{w}$ is the Gaussian binomial coefficient that is equal to the number of 
subspaces of $\Fq^m$ of dimension $w$. Recall that this coefficient satisfies
$\CG{m}{w} = \Theta\left(q^{w(m-w)} \right)$.
\end{notation}

The cost to solve a linear system of $(n-k)m$ unknown by Gaussian elimination is $\OO{(n-k)^3 m^3}$. Thus, the overall expected complexity for this algorithm is $\OO{(n-k)^3 m^3 q^{w \lceil \frac{km}{n} \rceil}}$.

As explained in  \cite{GRS13} $F$ can be viewed as the support of a codeword 
for the rank metric.  What makes this notion interesting is that it establishes a parallel with the Hamming metric : indeed, if we know the support $\cv$ of a codeword $\cv$  we can recover $\cv$  in polynomial time by solving a linear system.

This algorithm is much more efficient than the algorithm in \cite{OJ02} when $m \leq n$. Let us notice 
that in the case $m >n$ we can improve this algorithm in a simple way by using the notion of the transposed
code which is defined as follows  \cite{GP06} 
\begin{definition}[transposed code]
The transposed of an $[m \times n,K]$ matrix code $\C$ over $\Fq$ is a
$[n\times m,K]$ matrix code $\C^T$ over $\Fq$ obtained by
$
\C^T =\{\Mv^T, \Mv \in \C\}.
$
\end{definition}

The idea underlying the definition of such a   code is that  transposing a matrix preserves its rank, therefore 
finding the minimum rank weight (nonzero) codeword $\C$ can be obtained from the transpose of the
minimum rank weight (nonzero) codeword of $\C^T$. Notice that taking the transpose basically swaps
the role of $n$ and $m$. This notion can be used when $m \geq n$ for finding a
codeword  of weight $w$ in a matrix code $\C$ by looking for a codeword of weight $w$ in 
$\C^T$. It is readily seen that this leads to an 

algorithm of complexity $\OO{(n-k)^3 m^3 q^{w \lceil k \rceil}}$ for finding a codeword of weight $w$.
This is precisely the complexity that the algorithm of \cite{OJ02} would give for finding 
a codeword of weight $w$ in a matrix code. However the algorithm presented here is much simpler
than the algorithm of \cite{OJ02}.

\section{A  low weight codeword finding algorithm 
using additional knowledge on the codeword}
\label{sec:transposed}

In this section, we  assume that we have additional knowledge about the codeword of
weight $w$ we want to find in the form of linear combinations of its columns.
More precisely we are looking for an algorithm whose input and output are specified 
in Algorithm \ref{algo:transposed}.
\begin{algorithm}
\SetKwInOut{Input}{Input}\SetKwInOut{Output}{Output}
\SetKwInOut{Assume}{Assumes}
\Input{{}\\
\noindent
(i) an 
 $[m \times n,k.m]$ matrix code $\C$ over $\Fq$ that has at least one codeword $\cv = (c_{ij})_{\substack{ 1 \leq i \leq m\\ 1 \leq j \leq n }}$ of rank weight 
$w$ \\
\noindent
(ii)  $a$ elements $\cv'_1,\dots,\cv'_a$ in $\Fq^m$ that are linear combinations of  columns of $\cv$.\\
\noindent
(iii) the coefficients $\lambda_{ij}$'s of these linear combinations, that is if we denote  by 
$\cv_{.,j}= (c_{ij})_{1 \leq i \leq m}$ the $j$-th column of $\cv$,  then  
$\cv'_i = \sum_{j=1}^n \lambda_{ij} \cv_{.,j}$ for $i \in \{1,\dots,a\}$. }
\Assume{ $\cv'_1,\dots,\cv'_a$ are linearly independent.}
\Output{a codeword of $\C$ of rank weight $w$.}
\caption{Low rank codeword finding with additional information
\label{algo:transposed}}
\end{algorithm}

The case of a matrix code obtained from an $\fqm$-linear code is a particular case of such 
an additional knowledge: as explained before we can assume that one of the columns of
the codeword we are looking for is the column $\left(10 \dots 0\right)^T$.
The folding attack that we present in Section \ref{sec:attack} will provide another example where
we have the knowledge of two independent linear combinations of the columns and will use in an 
essential way the algorithm we give here.

\subsection{The case $n \geq m$}\label{ss:first}

We use here a variation of the support trapping algorithm \cite{GRS13}.
The case when $a=1$ and  when the matrix code is obtained from an $\fqm$-linear code is already 
treated in \cite[Prop. 3.1]{GRS13}. Generalizing this argument to the more general setting 
considered here just consists in chosing in the error trapping algorithm recalled in 
Section \ref{sec:basics} an  $F$ as a random subspace of dimension $r$ that contains the subspace generated by the $a$ elements $\cv'_1,\dots,\cv'_a$. This leads to the following proposition.

\begin{proposition}\label{prop:complexity1}
The support trapping algorithm outlined above has expected complexity 
$\OO{(n-k)^3 m^3 q^{(w-a) \lceil \frac{km}{n} \rceil}}$ when applied on a 
matrix code over $\fq$ of type $[m \times n,k.m]$.
\end{proposition}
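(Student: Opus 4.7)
The plan is to adapt the support trapping algorithm of Section \ref{sec:basics} by exploiting the fact that the column space $E$ of the target codeword $\cv$ must contain the subspace $V \eqdef \prodev{\cv'_1,\dots,\cv'_a}$. Indeed, each $\cv'_i$ is by hypothesis a linear combination of the columns of $\cv$, hence lies in $E$; together with the assumed independence of the $\cv'_i$'s, this gives $\dim V = a$ and $V \subset E$.

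I would modify only the guessing step of the original algorithm: instead of drawing $F$ uniformly among all $r$-dimensional subspaces of $\fq^m$ with $r = m - \lceil km/n \rceil$, draw $F$ uniformly among the $r$-dimensional subspaces that contain $V$. The rest of the algorithm is unchanged: express the unknown columns of $\cv$ in a basis $(\fv_1,\dots,\fv_r)$ of $F$, substitute into the $(n-k)m$ parity-check equations provided by $\C$, and solve the resulting linear system in $nr$ unknowns. Each trial still costs $\OO{(n-k)^3 m^3}$ by Gaussian elimination.

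Next I would compute the probability that such a random $F$ satisfies $E \subset F$. Since both $E$ and $F$ contain $V$, passing to the quotient $\fq^m/V \simeq \fq^{m-a}$ reduces the computation to the one already performed in Section \ref{sec:basics}, but with $(m,r,w)$ replaced by $(m-a,r-a,w-a)$. Adapting \eqref{eq:proba} and using the standard estimate for the Gaussian coefficient recalled in the excerpt yields
\begin{equation*}
\prob(E \subset F) \;=\; \frac{\CG{r-a}{w-a}}{\CG{m-a}{w-a}} \;=\; \Theta\!\left(q^{-(w-a)(m-r)}\right) \;=\; \Theta\!\left(q^{-(w-a)\lceil km/n \rceil}\right).
\end{equation*}
Multiplying the per-trial cost by the expected number of trials, which is the inverse of this probability, gives the claimed complexity $\OO{(n-k)^3 m^3 q^{(w-a)\lceil km/n \rceil}}$.

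The main obstacle is the probability estimate; the key point is to verify that drawing $F$ uniformly among the $r$-dimensional subspaces containing $V$ corresponds, via the canonical projection $\fq^m \to \fq^m/V$, to drawing a uniform $(r-a)$-dimensional subspace of $\fq^{m-a}$ (and similarly for $E$). This is a standard consequence of the bijection between subspaces containing $V$ and subspaces of the quotient, and is precisely where the linear independence of $\cv'_1,\dots,\cv'_a$ enters: without it, $\dim V < a$ and the exponent would only drop by $\dim V$ rather than by $a$.
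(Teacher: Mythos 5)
Your proposal is correct and follows essentially the same route as the paper: the paper also chooses $F$ uniformly among the $r$-dimensional subspaces containing $E'=\prodev{\cv'_1,\dots,\cv'_a}$ and establishes $\prob(E\subset F)=\CG{r-a}{w-a}\big/\CG{m-a}{w-a}=\Theta\left(q^{-(w-a)(m-r)}\right)$ via the bijection between subspaces containing $E'$ and subspaces of the quotient $\Fq^m/E'$ (its Proposition \ref{prop:subspace}), then multiplies the inverse probability by the $\OO{(n-k)^3m^3}$ per-trial cost. Your observation about where the linear independence of the $\cv'_i$ is used is a nice explicit touch, but the argument is the same.
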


The complexity given follows almost immediately from the following proposition.

\begin{prop}\label{prop:subspace}
Let $E$ be a subspace of dimension $w$ of $\Fq^m$ and let $E'$ be a subspace of $E$ of dimension $a$. Let $S$ be the set of subspaces of dimension $r$ of $\Fq^m$ that contain $E'$ and
let $F$ be an element of $S$ chosen uniformly at random. We have
$$
\prob \left( E \subset F \right)  = \dfrac{\CG{r-a}{w-a}}{\CG{m-a}{w-a}} = \Theta\left(q^{(w-a)(m-r)}\right).
$$
\end{prop}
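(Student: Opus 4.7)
The plan is to reduce the whole problem to a quotient space calculation, and then obtain the stated formula via a standard double-counting identity for Gaussian coefficients.

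First I would exploit the fact that every $F \in S$ contains $E'$, and so does $E$. So I would pass to the quotient $V \eqdef \Fq^m/E'$, which is a vector space of dimension $m-a$, and let $\bar{E} \eqdef E/E'$ (of dimension $w-a$) and $\bar{F} \eqdef F/E'$ (of dimension $r-a$). By the correspondence theorem, the map $F \mapsto \bar{F}$ is an inclusion-preserving bijection between $S$ and the set of $(r-a)$-dimensional subspaces of $V$, and furthermore $E \subset F$ if and only if $\bar{E} \subset \bar{F}$. Therefore $\bar{F}$ is uniformly distributed over the $(r-a)$-dimensional subspaces of $V$, and it suffices to compute
$$
\prob(\bar{E} \subset \bar{F}) \;=\; \frac{\#\{ G \subset V : \dim G = r-a,\ G \supset \bar{E}\}}{\CG{m-a}{r-a}}.
$$
The numerator counts $(r-a)$-dimensional subspaces of $V$ containing the fixed $(w-a)$-dimensional subspace $\bar{E}$, which by a second quotient (by $\bar{E}$ inside $V$) equals $\CG{m-w}{r-w}$, giving a first closed form for the probability.

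To recover the precise expression stated in the proposition, I would use the standard double counting of flags: counting pairs $(\bar{E}_0, \bar{F}_0)$ with $\bar{E}_0 \subset \bar{F}_0 \subset V$ of respective dimensions $w-a$ and $r-a$ in two ways gives the identity
$$
\CG{m-a}{w-a}\CG{m-w}{r-w} \;=\; \CG{m-a}{r-a}\CG{r-a}{w-a},
$$
so that $\CG{m-w}{r-w}/\CG{m-a}{r-a} = \CG{r-a}{w-a}/\CG{m-a}{w-a}$, which is the formula we want.

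For the asymptotic estimate, I would simply plug in the identity $\CG{n}{k} = \Theta(q^{k(n-k)})$ recalled in the notation, yielding
$$
\frac{\CG{r-a}{w-a}}{\CG{m-a}{w-a}} \;=\; \Theta\!\left( q^{(w-a)(r-w) - (w-a)(m-w)} \right) \;=\; \Theta\!\left( q^{-(w-a)(m-r)} \right),
$$
which is consistent with \eqref{eq:proba} in the case $a=0$ (the sign in the exponent stated in the proposition appears to be a typo). The reasoning presents no real difficulty; the only step where I would take some care is making sure the quotient bijection correctly transports both the uniform distribution on $S$ and the inclusion $E \subset F$, so that the probability is really computed in the $(m-a)$-dimensional quotient. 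The rest is routine manipulation of Gaussian binomials.
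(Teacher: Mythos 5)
Your proof is correct, and its first move --- quotienting by $E'$ and invoking the correspondence theorem --- is exactly the paper's (the paper introduces the canonical surjection $\pi:\Fq^m \to \Fq^m/E'$ and proves the same dimension lemma you use). Where you genuinely diverge is in the counting step. The paper fixes $F$ and counts the $w$-dimensional subspaces containing $E'$ that sit inside $F$ versus inside all of $\Fq^m$, obtaining $\CG{r-a}{w-a}\big/\CG{m-a}{w-a}$ directly; this tacitly identifies the probability over a random $F$ containing a fixed $E$ with the fraction of admissible $E$'s inside a fixed $F$, a symmetry it does not spell out. You instead compute the probability that is literally asked for --- the number of $(r-a)$-dimensional subspaces of the quotient containing $\bar{E}$, divided by $\CG{m-a}{r-a}$ --- which yields $\CG{m-w}{r-w}\big/\CG{m-a}{r-a}$, and you then invoke the flag double-counting identity to recover the stated closed form. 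Your route costs one extra identity but is more faithful to the probability space in the statement; the paper's is shorter but leans on the unstated exchange of the roles of $E$ and $F$, for which your flag identity is, in effect, the missing justification. Your asymptotic computation agrees with the paper's, and you are right that the exponent in the displayed statement is missing a minus sign: the paper's own proof concludes with $\Theta\left(q^{-(w-a)(m-r)}\right)$.
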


\begin{proof}
Let $V = \Fq^m/E' \simeq \Fq^{m-a}$.\\
Let $\pi$ be the canonical surjection from $\Fq^m$ to $V$ :
\[ \left.\begin{array}{rccc}
\pi : & \Fq^m & \rightarrow & V\\
 & \xv & \mapsto & \xv + E'
\end{array}\right.  \]
It is well known that $\pi$ gives a one-to-one correspondence between the subspaces of $\Fq^m$ which contain $E'$ and the subspaces of $V$.\\
\begin{lemma}
Let $F$ be a subspace of dimension $r$ of $\Fq^m$ that contains $E'$. The dimension of  $\pi(F)$ is $r-a$.
\end{lemma}
\begin{proof}
Let $(\ev_1, \dots , \ev_a)$ be a basis of $E'$. We can complete this basis into a basis $(\ev_1, \dots , \ev_a, \fv_1, \dots, \fv_{r-a})$ of $F$. It is obvious that $(\pi(\fv_1), \dots, \pi(\fv_{r-a}))$ is a basis of $\pi(F)$, so $\dim \pi(F) = r-a$
\end{proof}

By using this lemma, we finish the proof of Proposition \ref{prop:subspace}.
There are $\CG{m-a}{w-a}$ subspaces of $V$ of dimension $w-a$ .
This implies that there exist $\CG{m-a}{w-a}$ subspaces of dimension $w$ of $\Fq^m$  that contain $E'$.\\

Let $F$ be a subspace of dimension $r$ of $\Fq^m$ that contains $E'$. According to the previous lemma, $\dim \pi(F) = r-a$. So $\pi(F)$ contains $\CG{r-a}{w-a}$ subspaces of dimension $w-a$. From this, we deduce that
$F$ contains $\CG{r-a}{w-a}$ subspaces of dimension $w$ that contain $E'$.
 Hence 
\[\prob \left( E \subset F \right)  = \dfrac{\CG{r-a}{w-a}}{\CG{m-a}{w-a}} = \Theta\left( q^{-(w-a)(m-r)} \right)\]
\end{proof}

The proof of Proposition \ref{prop:complexity1} follows directly from this Proposition. Indeed
we choose in the support trapping algorithm, $E'$ to be the linear space generated by $\cv'_1,\dots,\cv'_a$ and 
$F$ as a random subspace of $\fq^m$ that contains $E'$. The expected complexity of the support trapping algorithm
is now given by the inverse of the probability that we computed in Proposition \ref{prop:complexity1} multiplied 
by the complexity of solving a linear system with $(n-k)m$ equations.

\subsection{The case $m > n$}\label{ss:second}

This will be treated essentially by a variation on the error trapping algorithm applied to the transposed code which uses in a suitable way the additional knowledge about the codeword we want to find. 
The technical difficulty we face here can be described as follows.
If we had additional knowledge about $\cv$ in the form of $a$ independent elements belonging to the row space
of $\cv$, then we could immediately apply the algorithm given in Section \ref{sec:basics} to the
transposed code. However it turns out that in the case we are interested in, the knowledge about $\cv$ that we  have concerns the column space of $\cv$. In this case, when 
we transpose $\cv$ to reverse the role of $n$ and $m$, this translates into some knowledge of the
row space of $\cv^T$ and we can not use the algorithm of Section \ref{sec:basics} anymore.
This is why we are going to consider a slightly more complicated algorithm which is able to use some
knowledge on the column space of $\cv$.
It will be essential for our attack that is given in Section \ref{sec:attack} to work to 
have an efficient algorithm for finding low-rank codewords by exploiting
some knowledge about the low-rank codeword we are looking for. Even if the underlying code
is defined for $m<n$ it turns out that we are reducing this problem to another
low-rank finding problem in a new code where $m>n$. Of course we could
still use the algorithm described in Section \ref{sec:basics}. It appears that 
 the Ourivski-Johansson \cite{OJ02} is better in the regime when $n < m$.
However, this algorithm in its \cite{OJ02} form is unable to take full advantage
of the knowledge we have about the low-weight codeword we are looking for.
 It would have been possible  
to give a version of the Ourivski-Johansson that exploits additional 
knowledge in the same way we generalized slightly the 
support trapping algorithm of \cite{GRS13}. However, the Ourivski-Johannson algorithm
is rather involved and we will use another approach here that recovers
the same work factor as the Ourivski-Johannsson algorithm in the
case of decoding a $\Fqm$-linear code but in a much simpler fashion.
This new decoding algorithm is in essence a support trapping 
algorithm working on the transposed code. 
  It will also be able to use
in a simple way additional knowledge about the low rank word we are looking for.

 The point is now that by applying a  version of the support trapping algorithm 
of \cite{GRS13} that makes use in a suitable way of the additional knowledge we have about the 
support,
we basically recover  an algorithm with the same complexity as the Ourivski-Johansson algorithm for decoding
$\Fqm$ linear codes. More generally it will have an exponential asymptotic complexity of order 
$ \OO{(n-k)^3m^3 q^{(w-a)k}}$ for an $[m \times n,k.m]$ matrix code over $\Fq$ when we know $a$ independent linear 
combinations of the columns of the matrix codeword of rank $w$ we are looking for.

This algorithm can be described as follows\\
{\bf Step 1 (transformation of the code):} We first transform the matrix code $\C$ by multiplying it at the right by an $n \times n$ invertible
matrix $\Pv$ such that $\cv$ gets transformed in a matrix $\cv'$ whose $i$ first  columns are precisely
the $\cv'_i$'s defined before. In other words, we consider the code 
$\C' = \C \Pv$. If $\cv$ is a word of rank weight $w$ then $\cv'$ is still a word of rank weight
$w$. Moreover by assumption on the independence of the $\cv'_i$'s for $i \in \{1,\dots,a\}$ we can further 
multiply $\C'$ on the left by an $m \times m$  invertible  
matrix $\Qv$ such that $\cv'$ gets transformed in a matrix $\cv"$ whose  first $a$ columns are the 
 first $a$ elements $\ev_1,\dots,\ev_a$ of the canonical basis of $\Fq^m$, that is
$\ev_i$ has only zero entries with the exception of the $i$-th entry which is equal 
to $1$.
Let $\C"$ be the resulting code obtained by these operations, that is
$$
\C" = \Qv \C \Pv 
$$
Notice that $\cv"$ has still rank $w$.

\noindent
{\bf Step 2: (setting up the unknowns of the linear system)} We are now basically going to apply a variation of the support trapping algorithm of \cite{GRS13} on 
$\C"^T$ by choosing a subspace $V$ of $\Fq^n$ of dimension $r$ ($r$ will be specified later on)
for which we hope that it contains the subspace generated by the columns of $\cv"^T$.
A basis $\vv_1,\dots,\vv_r$ of this space is chosen such that
\begin{eqnarray}
v_{j,i} &= &0 \text{ for $i$, $j$ in $\{1,\dots,a\}$ and $i \neq j$}\label{eq:cond0}\\
v_{i,i} &= &1 \text{ for $i$ in $\{1,\dots,a\}$ }\label{eq:cond1}\\
v_{j,i} & = & 0 \text{ for $i$ in  $\{a+1,\dots,r\}$ and $j$ in $\{1,\dots,a\}$}\label{eq:cond2}
\end{eqnarray}
where $v_{j,i}$ denotes the $j$-th coordinate of $\vv_i$.
The entries $v_{j,i}$ are chosen uniformly at random for  $i$ in  $\{a+1,\dots,r\}$ and $j$ in $\{a+1,\dots,n\}$.
The entries of $v_{j,i}$ for  $i$ in  $\{1,\dots,a\}$ and $j$ in $\{a+1,\dots,n\}$ will be chosen 
afterwards.
Denote by $\Cv_1,\dots,\Cv_m$ the $m$ columns of $\cv"^T$. 
Let us introduce the $x_{s,t}$'s in $\Fq$ that are such that
\begin{equation}
\label{eq:starting_point}
\Cv_s= \sum_{t=1}^r x_{s,t} \vv_t \text{  for $s$ in $\{1,\dots,m\}$.}
\end{equation}
Notice now the following point 
\begin{lemma}\label{lem:unknowns}
For $s >a$ and all $i$ in $\{1,\dots,a\}$ we have 
$x_{s,i}=0$.
If we denote by $C_{i,j}$ the $i$'th element of the $j$-th column $\Cv_j$ of 
$\cv"^T$ then
$C_{i,j}=0$ for all $i,j$ in $\{1,\dots,a\}$ with the exception of the diagonal elements
$C_{i,i}$ that are equal to $1$.
\end{lemma}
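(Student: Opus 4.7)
The plan is to unwind the construction of $\cv''$ together with the three shape conditions placed on the basis $\vv_1,\dots,\vv_r$; no deeper ingredient is required. By the construction in Step~1 the first $a$ columns of $\cv''$ are $\ev_1,\dots,\ev_a$, so the entries of $\cv''$ satisfy $\cv''_{s,i}=(\ev_i)_s=\delta_{i,s}$ whenever $i\in\{1,\dots,a\}$ and $s\in\{1,\dots,m\}$. Since $C_{i,j}$ is by definition the $i$-th coordinate of the $j$-th column of $(\cv'')^T$, one has $C_{i,j}=\cv''_{j,i}$. Specialising to $i,j\in\{1,\dots,a\}$ immediately yields $C_{i,j}=\delta_{i,j}$, which is the second assertion of the lemma.

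For the first assertion I would fix $s>a$ and $i\in\{1,\dots,a\}$ and use the same identity to obtain $C_{i,s}=\cv''_{s,i}=(\ev_i)_s=0$, since $s>a\ge i$. Reading off the $i$-th coordinate of equation~\eqref{eq:starting_point} then gives
\[
0 \;=\; C_{i,s} \;=\; \sum_{t=1}^{r} x_{s,t}\, v_{i,t}.
\]
The three conditions \eqref{eq:cond0}, \eqref{eq:cond1} and \eqref{eq:cond2} were chosen precisely so that, for every $i\in\{1,\dots,a\}$, the vector $(v_{i,1},\dots,v_{i,r})$ is the $i$-th standard unit vector of $\Fq^r$: \eqref{eq:cond0} and \eqref{eq:cond1} handle the indices $t\le a$, while \eqref{eq:cond2} kills the remaining $v_{i,t}$ for $t>a$. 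The sum therefore collapses to $x_{s,i}$, and we conclude $x_{s,i}=0$.

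The only source of confusion is the index convention, since $v_{j,i}$ is the $j$-th coordinate of $\vv_i$ whereas $C_{i,j}$ is the $i$-th coordinate of $\Cv_j$, so the two indices play swapped roles; once this bookkeeping is kept straight there is no real obstacle. In essence the lemma merely records that the shape constraints imposed on $\vv_1,\dots,\vv_r$ mirror the shape of the top-left $a\times a$ block of $\cv''$.
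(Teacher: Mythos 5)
Your proof is correct and follows essentially the same route as the paper's: both arguments rest on the fact that the first $a$ columns of $\cv"$ are $\ev_1,\dots,\ev_a$ (which the paper phrases as $\Rv_i=\ev_i$ for the first $a$ rows of $\cv"^T$) combined with the shape conditions \eqref{eq:cond0}--\eqref{eq:cond2} on the basis vectors. Your version merely makes explicit the coordinate extraction that the paper leaves implicit, and the index bookkeeping is handled correctly.
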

\begin{proof}
Denote by $\Rv_1,\dots,\Rv_n$ the $n$ rows of $\cv"^T$.
Notice that 
\begin{equation}\label{eq:Ri}
\Rv_i = \ev_i, \text{ for $i$ in $\{1,\dots,a\}$}
\end{equation}
where the $\ev_i$'s are as before the canonical basis of $\Fq^m$.
This implies directly that $C_{i,j}=0$ for all $i,j$ in $\{1,\dots,a\}$ with the exception of the diagonal elements
$C_{i,i}$ that are equal to $1$.
Moreover, by using \eqref{eq:Ri} together with \eqref{eq:cond0},\eqref{eq:cond1} and \eqref{eq:cond2} we know that $x_{s,i}=0$ for $s >a$ and all $i$ in $ \{1,\dots,a\}$.
\end{proof}

This motivates to define as unknowns the $(m-a)(r-a)+a(n-a)$ quantities $x_{s,t}$ and $C_{i,j}$ 
for $s$ in $\{a+1,\dots,m\}$, $t$ in $\{a+1,\dots,r\}$, $i$ in $\{a+1,\dots,n\}$ and $j$ in
$\{1,\dots,a\}$.

Moreover these unknowns satisfy $nm - km=(n-k)m$ linear equations obtained from 
the fact $\cv"^T$ belongs to $\C"^T$ which is a matrix code of dimension $km$. They can be 
obtained by computing a parity-check matrix of this code, then expressing the linear equations
that the entries of $\cv"^T$ have to satisfy and then replacing these entries by the aforementioned
unknowns by using \eqref{eq:starting_point} and Lemma \ref{lem:unknowns}.
We choose $r$ such that the number of equations, that is $(n-k)m$ is at least equal
to the number of unknowns, that is
$$
(n-k)m \geq (m-a)(r-a)+a(n-a) 
$$
This can be obtained by choosing 
$$
r \eqdef \left\lfloor \frac{m}{m-a}(n-k) + a \frac{m-n}{m-a} \right\rfloor
$$

\noindent{\bf Step 3: (solving the linear system)} The last point just consists in solving the linear system, this 
yields $\cv"^T$ and from this we deduce $\cv"$ and then $\cv$ by 
$$
\cv = \Qv^{-1}\cv"  \Pv^{-1}
$$

The last point to understand is under which condition $V$ contains the  subspace generated by the columns of $\cv"^T$.
This depends on how we specify the entries $v_{j,i}$ for  $i$ in  $\{1,\dots,a\}$ and $j$ in $\{a+1,\dots,n\}$. We choose them such that \eqref{eq:starting_point} is verified for
$s$ in $\{1,\dots,a\}$. This can obviously be done by choosing
\begin{equation}
\label{eq:obvious}
\vv_i = \Cv_i \text{ for } i \in \{1,\dots,a\}
\end{equation}

\begin{lemma}\label{lem:contains}
Let $V$ be chosen by a basis $\vv_1,\dots,\vv_r$ such that 
 its $a$ first elements are given by \eqref{eq:obvious} and as specified in 
Step 2 for the other elements.
Let $W$ be the subspace generated by the columns of $\cv"^T$.
Let $W_0$ be the subspace of $W$ that is formed by the elements whose first $a$ entries are all equal to $0$. 
In the same way, we denote by $V_0$ the subspace that is formed by the elements of $V$ whose
 first $a$ entries are all equal to $0$.
We have $W \subset V$ iff $W_0 \subset V_0$.\end{lemma}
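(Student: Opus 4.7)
The plan is to prove the two directions separately and exploit the fact that the first $a$ basis vectors of $V$ are literally equal to the first $a$ columns of $\cv"^T$, which by Lemma \ref{lem:unknowns} have a very rigid structure in their first $a$ coordinates.

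For the forward direction $W \subset V \Rightarrow W_0 \subset V_0$: this is immediate. If $w \in W_0$, then $w \in W \subset V$ and, since the first $a$ entries of $w$ vanish by definition of $W_0$, we have $w \in V_0$.

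For the backward direction, the key observation is the following structural fact about the columns $\Cv_1, \dots, \Cv_m$ of $\cv"^T$, which follows directly from Lemma \ref{lem:unknowns}. For $j \in \{1,\dots,a\}$, the first $a$ coordinates of $\Cv_j$ form the standard basis vector $\ev_j$ of $\Fq^a$, while for $j \in \{a+1,\dots,m\}$ the first $a$ coordinates of $\Cv_j$ are all zero. Now take an arbitrary $w \in W$ and expand $w = \sum_{j=1}^m \mu_j \Cv_j$. Reading off the first $a$ entries of $w$ using the structure just described gives $w_i = \mu_i$ for $i \in \{1,\dots,a\}$. Hence the vector
\[
w' \eqdef w - \sum_{i=1}^a w_i \Cv_i = \sum_{j=a+1}^m \mu_j \Cv_j
\]
is still in $W$ (it is a linear combination of the $\Cv_j$), and its first $a$ coordinates all vanish (each $\Cv_j$ with $j>a$ has this property). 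Thus $w' \in W_0$.

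Now invoke the choice $\vv_i = \Cv_i$ for $i \in \{1,\dots,a\}$ from equation \eqref{eq:obvious}: we can rewrite $w = w' + \sum_{i=1}^a w_i \vv_i$. The second summand lies in $V$ by construction, and by the hypothesis $W_0 \subset V_0 \subset V$ the first summand $w'$ also lies in $V$. Therefore $w \in V$, proving $W \subset V$. I do not anticipate a genuine obstacle here; the only subtlety is making sure the coefficients recovered from the first $a$ coordinates of $w$ really are those multiplying $\Cv_1,\dots,\Cv_a$, which is guaranteed by the two-part structure of the $\Cv_j$'s described above.
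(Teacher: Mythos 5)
Your proof is correct and follows essentially the same route as the paper's: both reduce the backward direction to the observation that $W$ is spanned by $W_0$ together with $\Cv_1,\dots,\Cv_a$, which coincide with $\vv_1,\dots,\vv_a \in V$ by \eqref{eq:obvious}. The only difference is that you explicitly justify this decomposition of $W$ via the coordinate structure of the $\Cv_j$'s (which really comes from $\Rv_i=\ev_i$ in \eqref{eq:Ri} rather than from the literal statement of Lemma \ref{lem:unknowns}), whereas the paper simply asserts it.
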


\begin{proof}
It is clear that $W \subset V$ implies $W_0 \subset V_0$.

Now assume that $W_0 \subset V_0$. Notice that $W$ is generated by 
$W_0$ and by the first $a$ columns of $\cv"$, that is $\Cv_1,\dots,\Cv_a$.
Since $V$ is generated by the same first $a$ columns of $\cv"$,  $\Cv_1,\dots,\Cv_a$ and by 
$V_0$ we have that $W \subset V$.
\end{proof}

Putting all these considerations together we obtain that
\begin{theorem}\label{th:comp_transposed}
Let $\C$ be an $[m \times n,k.m]$ matrix code which has at least one codeword of rank weight 
$w$  for which we know $a$ independent linear combinations of its columns as specified in 
Algorithm \ref{algo:transposed}.
Assume that $n \leq m$ and
let $r \eqdef \left\lfloor \frac{m}{m-a}(n-k) + a \frac{m-n}{m-a} \right\rfloor$. Then the algorithm 
given in this section outputs a codeword of weight $w$
with complexity 
$\OO{(n-k)^3 m^3 q^{(w-a)(n-r)}}$.
\end{theorem}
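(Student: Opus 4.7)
My plan is to establish the theorem by showing that each iteration of the algorithm succeeds with probability $\Theta\!\left(q^{-(w-a)(n-r)}\right)$ and costs $\OO{(n-k)^3 m^3}$, so that the expected number of iterations times the per-iteration cost matches the stated bound. First I would observe that the transformations $\Pv$ and $\Qv$ from Step~1 preserve rank weight, so $\cv'' = \Qv\cv\Pv$ is still of rank $w$ in $\C''$. A trial will succeed (i.e.\ the solution of the linear system coincides with $\cv''^T$, after which we recover $\cv = \Qv^{-1}\cv''\Pv^{-1}$) as long as the column space $W$ of $\cv''^T$ is contained in the subspace $V$ chosen in Step~2: in that case the relations $\Cv_s = \sum_{t=1}^r x_{s,t} \vv_t$ are consistent, the parity-check equations of $\C''^T$ are satisfied by $\cv''^T$, and the choice of $r$ makes the number $(n-k)m$ of equations at least equal to the number $(m-a)(r-a)+a(n-a)$ of unknowns, so generically the solution is unique.

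Next I would apply Lemma~\ref{lem:contains} to reduce the condition $W \subset V$ to $W_0 \subset V_0$, and identify dimensions and distribution. By Lemma~\ref{lem:unknowns}, the first $a$ columns $\Cv_1,\dots,\Cv_a$ of $\cv''^T$ project onto $\ev_1,\dots,\ev_a$ under the projection $\pi : \Fq^n \to \Fq^a$ on the first $a$ coordinates, so $\pi_{|W}$ is surjective with kernel $W_0$, giving $\dim W_0 = w-a$. On the other side, conditions \eqref{eq:cond0}--\eqref{eq:cond2} force $V_0 = \mathrm{span}(\vv_{a+1},\dots,\vv_r)$ to have dimension $r-a$, and since the entries $v_{j,i}$ for $i \in \{a+1,\dots,r\}$ and $j \in \{a+1,\dots,n\}$ are drawn uniformly at random, $V_0$ is (up to a negligible event of linear dependence) a uniformly random subspace of dimension $r-a$ inside the $(n-a)$-dimensional subspace of $\Fq^n$ of vectors vanishing on the first $a$ coordinates.

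Finally, the standard Gaussian coefficient identity (a direct analogue of Proposition~\ref{prop:subspace} with ambient dimension $n-a$ instead of $m-a$) yields
\[
\prob\!\left( W_0 \subset V_0 \right) \;=\; \frac{\CG{r-a}{w-a}}{\CG{n-a}{w-a}} \;=\; \Theta\!\left( q^{-(w-a)(n-r)} \right),
\]
and Gaussian elimination on a system of $(n-k)m$ linear equations in at most $(n-k)m$ unknowns costs $\OO{(n-k)^3 m^3}$, producing the claimed total expected complexity. The step I expect to be the most delicate is the distributional analysis of $V_0$: one has to check that the constraints \eqref{eq:cond0}--\eqref{eq:cond2} together with the choice $\vv_i = \Cv_i$ for $i \leq a$ do not introduce a hidden bias, i.e.\ that $V_0$ is genuinely uniform in the $(n-a)$-dimensional ambient space (which is where the correct exponent $(w-a)(n-r)$ — as opposed to $(w-a)(m-r)$ or similar — comes from) and not artificially constrained by the requirement that $V$ contain the known columns $\Cv_1,\dots,\Cv_a$.
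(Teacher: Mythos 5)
Your proposal is correct and follows essentially the same route as the paper: reduce the success condition to $W_0 \subset V_0$ via Lemma~\ref{lem:contains}, count with Gaussian coefficients as in Proposition~\ref{prop:subspace} (applied in ambient dimension $n-a$) to get $\Theta\bigl(q^{-(w-a)(n-r)}\bigr)$ per trial, and multiply the expected number of trials by the $\OO{(n-k)^3m^3}$ cost of Gaussian elimination. Your version is in fact more detailed than the paper's one-paragraph proof — in particular the explicit identification of $\dim W_0 = w-a$ via surjectivity onto the first $a$ coordinates, and the check that $V_0$ is an unbiased uniform $(r-a)$-dimensional subspace of the $(n-a)$-dimensional ambient space, are points the paper leaves implicit.
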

\begin{proof}
This follows immediately from Lemma \ref{lem:contains} and Proposition \ref{prop:subspace}
that show that we will try an expected number of $\dfrac{\CG{n-a}{w-a}}{\CG{r-a}{w-a}}=\Theta\left( q^{(w-a)(n-r)}\right)$
spaces $V$ before finding the right one if there is only one codeword $\cv$ which has the  right form.
This is of course an upper bound if there are more than one codeword that have the right form.
Each try of a tentative space $V$ takes time $\OO{(n-k)^3 m^3}$ whose complexity is dominated by Step 3 when we solve
a linear system with $(n-k)m$ equations and a number of unknowns that is less than the number of
equations.
\end{proof}

\section{Folding and projecting attack}\label{sec:attack}
In this section we present a key recovery attack on the LRPC cryptosystem \cite{GMRZ13}.  The codes used there are defined by

\begin{defi}[LRPC code]
Let $C$ be the matrix code associated to the $\fqm$-linear code with a full rank 
parity check matrix $\Hv$ of size $(n-k) \times n$. It defines 
an $[n,k]$ LRPC  of weight $d$ if the $\fq$ subspace of $\fqm$ generated by the entries 
of $\Hv$ is of dimension $d$.
\end{defi}

A probabilistic decoding algorithm of polynomial time for LRPC codes is presented in \cite{GMRZ13}.
This algorithm uses in an essential way that such a code has a parity-check matrix $\Hv$ that has entries in a subspace of small dimension.
$\Hv$ can be easily hidden by giving a systematic parity-check matrix $\Hv_{\text{syst}}$. This family of codes can then be used in a McEliece type scheme \cite{M78} : the secret key is $\Hv$ and the public
key is $\Hv_{\text{syst}}$. To recover the secret key, the attacker must find a word of weight $d$ in the dual of $\C$, which is hard in principle. To decrease the key sizes, double-circulant LRPC codes 
are suggested in \cite{GMRZ13}.

\begin{defi}
A double-circulant LRPC (DC-LRPC) code of weight $d$ over $\fqm$ is an LRPC code defined from a double-circulant parity check matrix $\Hv = \begin{pmatrix} \Hv_1 & \Hv_2 \end{pmatrix}$ where $\Hv_1$ and $\Hv_2$ are two 
circulant matrices and the $\fq$-subspace of $\fqm$ generated by the entries of $\Hv$ is of dimension $d$.
\end{defi}

\subsection{Folded and projected codes}
We  present here two new ingredients of the attacks that follow, namely the notion
of folded code and  the notion of projected code. The first attack uses only folding but the
second attack uses both. The notion of projected codes uses the polynomial framework
for dealing with quasi-cyclic codes \cite{LS01a,LF01}. Quasi-cyclic 
codes are a generalization of double-circulant codes : they are defined by a parity check matrix
formed only from circulant blocks. Such a quasi-cyclic
code of length $N = \ell n$ defined over a finite field $\K$,  where the size of the circulant blocks is $n$, can also be viewed as code over the ring $\K[X]/(X^n-1)$. This is a specific instance of cellular codes that are codes
defined over a ring $\R=\K[X]/(f(X))$ where $f(X)$  is a polynomial of $\K[X]$.

For the reader's convenience, we recall here the polynomial formalism of \cite{LS01,LF01} and follow the 
presentation given in \cite{L14}.
Recall that the fact that quasi-cyclic codes can be viewed as codes defined over the 
ring $\K[X]/(f(x))$ follows directly from 
\begin{prop}\label{prop1}The set of  circulant matrices of size $n \times n$ over 
$\fqm$ is isomorphic to the $\Fqm$-algebra  $\Fqm[X]/(X^n-1)$ by the function $\phi$ 
\[ \phi\left(\sum_{i=0}^{n-1} a_i X^i\right) = \begin{pmatrix}
a_0 & a_1 &\dots & a_{n-1}\\
a_{n-1} & a_0 &\dots & a_{n-2}\\
& \ddots & \ddots & \\
a_1 & a_2 &\dots & a_0
\end{pmatrix} \]
\end{prop}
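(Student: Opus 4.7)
The plan is to verify that the map $\phi$ is a bijective $\fqm$-algebra morphism in three short steps, with the multiplicative property being the only nontrivial one.

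First I would establish that $\phi$ is an $\fqm$-linear bijection. A circulant matrix is entirely determined by its first row $(a_0,\dots,a_{n-1})$, so the set of circulant matrices of size $n\times n$ is an $\fqm$-subspace of $\fqm^{n\times n}$ of dimension $n$, naturally parameterized by $\fqm^n$. The polynomial quotient $\fqm[X]/(X^n-1)$ is also an $\fqm$-vector space of dimension $n$, with basis $\{1,X,\dots,X^{n-1}\}$. The map $\phi$ clearly sends $X^i$ to the matrix whose first row is $\ev_{i+1}$ (the $(i+1)$-th standard basis vector), and extends $\fqm$-linearly. Since the images of the basis vectors form a basis of the space of circulant matrices, linearity and bijectivity follow at once.

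Second, and this is the only real content, I would verify that $\phi$ is multiplicative. The cleanest way is to introduce the cyclic shift matrix $J \eqdef \phi(X)$, which acts on $\fqm^n$ by rotating coordinates. A direct computation shows that $J^i$ has its first row equal to $\ev_{i+1}$ for $0\leq i \leq n-1$, i.e.\ $J^i = \phi(X^i)$, and that $J^n = I_n = \phi(1)$. Thus the subalgebra of $\fqm^{n\times n}$ generated by $J$ satisfies the relation $J^n - I_n = 0$, so there is a well-defined $\fqm$-algebra morphism $\fqm[X]/(X^n-1) \to \fqm^{n\times n}$ sending $X$ to $J$, and by construction this morphism is exactly $\phi$. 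Since $\phi$ is an $\fqm$-algebra morphism and we have already shown it is a bijection onto the space of circulant matrices, this gives the isomorphism.

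The main (and only) obstacle is the identification of the circulant matrix $\phi(p)$ with a polynomial in $J$: one has to check the formula $\phi(p) = \sum_{i=0}^{n-1} a_i J^i$ for $p=\sum_i a_i X^i$, which boils down to the observation that summing the shifted basis rows produces exactly the circulant pattern displayed in the statement. Once this is verified, multiplicativity of $\phi$ is automatic because it follows from the multiplicativity of $J\mapsto J$ in the matrix algebra together with the relation $J^n=I_n$, which exactly mirrors the defining relation $X^n=1$ in the quotient ring.
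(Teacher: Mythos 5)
Your proof is correct. The paper itself gives no proof of this proposition --- it is recalled as a standard fact from the polynomial formalism for quasi-cyclic codes of \cite{LS01,LF01} --- so there is nothing to compare against; your argument via the cyclic shift matrix $J=\phi(X)$, the identity $\phi(p)=\sum_i a_i J^i$, and the relation $J^n=I_n$ mirroring $X^n=1$ is the standard and complete way to establish it (and, as a byproduct, it shows that the circulant matrices do form an $\Fqm$-subalgebra, since they are exactly the image of the algebra morphism determined by $X\mapsto J$).
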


More generally we consider codes over a finite field $\K$ derived from codes defined over a
ring 
$$
\R \eqdef \K[X]/(f(X))
$$
where $f$ is some polynomial in $\K[X]$ of degree $n$.
They are derived from the following $\K$-isomorphism $\psi:\R \rightarrow \K^{n}$  :
$$
a(X) = \sum_{i=0}^{n-1} a_i X^i \mapsto \psi(a(X)) =(a_0,\dots,a_{n-1}).
$$
They are called cellular codes and are defined by
\begin{definition}[cellular code]
Consider a submodule $M$ of $\R^\ell$ of rank $s$. Let $\psi^\ell: \R^\ell \rightarrow \K^{\ell n}$ that maps an element $(f_1, \dots,f_l)$ of 
$\R^\ell$ to $\K^{\ell n}$ by mapping each $f_i$ to $\psi(f_i)$.
The cellular code associated to $M$ is given by $\psi^\ell(M)$.
It is said to have index $\ell$ and it is a $\K$-linear code of length
$\ell n$. 
\end{definition}

\begin{remark}
In order to avoid cumbersome notation, we identified $M$ with $\psi^\ell(M)$ 
in Section \ref{sec:attack}. Sometimes 
it will better to view the cellular code $\psi^\ell(M)$ as $M$ and we will freely do this.
\end{remark}

To obtain a generator matrix of the cellular code from a generator matrix
$$
\Gv_M = \begin{pmatrix} a_{1,1}(X) &\dots &a_{1,\ell}(X) \\
\vdots &\ddots & \vdots \\
a_{s,1}(X) &\dots &a_{s,\ell}(X)
\end{pmatrix}
$$
of the rank $s$-submodule we introduce the 
following mapping $\phi:\R \rightarrow \K^{n \times n}$:
$$
a(X) = \sum_{i=0}^{n-1} a_i X^i \mapsto \phi(a(X)) =\left( \begin{array}{c} 
\psi(a(X))\\
\psi(Xa(X))\\
\vdots \\
\psi(X^{n-1}a(X))\end{array}\right)
$$
It is a bijective morphism of $\K$-algebras. When $f(X)=X^n-1$ this is  
precisely the mapping that appears in Proposition \ref{prop1}.
A generator matrix of the associated cellular code is now given by
$$
\Gv = \begin{pmatrix} \Av_{1,1} &\dots & \Av_{1,\ell} \\
\vdots &\ddots & \vdots \\
\Av_{s,1} &\dots &\Av_{s,\ell}
\end{pmatrix}
$$
where $\Av_{i,j} = \phi(a_{i,j}(X))$.
This implies that the dimension $k$ of the cellular code satisfies $k \leq n s$.


\begin{definition}[projected code]
Consider a cellular code $\C$ of index $\ell$ defined over $\R \eqdef \K[X]/(f(X))$ and let $g(X)$ be a divisor of
$f(X)$ in $\K[X]$. The projected cellular code $\Cp{g}$ is obtained by viewing a codeword $\cv$
of $\C$ as an element of $R^l$ : $\cv=(\cv_1,\dots,\cv_\ell)$ and applying the surjective morphism 
$\Pi$ from $\K[X]/(f(X))$ to $\K[X]/(g(X))$ defined by $\Pi(a(X)) = a(X) \pmod{g(X)}$ to every entry
$\cv_i$.
\end{definition}

In the particular case where $f(X) = X^n-1$ and $g(X)=X^{m}-1$ where $m$ is a divisor of $n$, projecting
corresponds to folding in the sense of \cite{FOPPT14a,FOPPT15}.
\begin{definition}[folded code]\label{def:folding}
Consider a quasi-cyclic code $\C$ of index $\ell$ and length $n \ell$. Let $m$ be a divisor of $n$.
Its folded code of order $m$ is a quasi-cyclic code of index $\ell$ and length $m \ell$ 
obtained by mapping each codeword $\cv=(c_0,\dots,c_{n \ell -1})$ of $\C$ to the 
codeword $\cv' = (c'_0,\dots,c'_{m \ell -1})$ where 
\[
c'_i = \sum_{s=0}^{\frac{n}{m}-1} c_{an+b+sm}
\]
and $a$ and $b$ are the quotient and the remainder of the euclidean division of $i$ by $m$:
\[
i=am+b \text{ with }a \text{ and }b\text{ integer and }b \in \{0,\dots,m-1\}.
\]
\end{definition}
This really amounts to sum the coordinates that belong to the same orbit of a (permutation) autorphism of
order $n/m$ that leaves the quasi-cyclic code invariant. 

There are two points which make these two notions very interesting in the cryptographic setting.
The first point is that these two reductions of the code do not lead to a trivial code at the
end (one could have feared to end up with the full space after projecting or folding).
This comes from the following proposition that is proved in \cite[Corollary 1]{L14}
\begin{proposition}
Consider a cellular code $\C$ of index $\ell$ defined over $\R \eqdef \K[X]/(f(X))$ and let $g(X)$ be a divisor of
$f(X)$ in $\K[X]$. The length of the projected code $\Cp{g}$ is $\ell \deg g$ whereas the dimension of 
$\Cp{g}$ is less than or equal to $s \ell$ where $s$ is the rank of the cellular code.
\end{proposition}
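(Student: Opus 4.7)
The plan is to verify the two claims separately, both relying on the observation that, since $g(X)$ divides $f(X)$, the reduction map $\Pi\colon \K[X]/(f(X)) \to \K[X]/(g(X))$ is a surjective $\K$-algebra homomorphism. Extending it componentwise gives $\Pi^\ell\colon \R^\ell \to (\K[X]/g(X))^\ell$, which is both $\K$-linear and, viewing the target as an $\R$-module via $\Pi$, a homomorphism of $\R$-modules.

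For the length, I would simply unpack the definition. A projected codeword has $\ell$ components, each living in $\K[X]/g(X)$, which is a $\K$-vector space of dimension $\deg g$ with basis $1, X, \dots, X^{\deg g - 1}$. Thus $\Cp{g}$ sits inside an ambient $\K$-space of dimension $\ell \deg g$, giving the stated length.

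For the dimension bound, the main idea is to transport the module-theoretic generation. The cellular code $M$ is generated, as an $\R$-module, by the $s$ rows of $\Gv_M$. Since $\Pi^\ell$ is a module homomorphism, the image $\Cp{g}$ is generated by the $s$ images of those rows, now as a $\K[X]/g(X)$-module. The cyclic $\K[X]/g(X)$-submodule of $(\K[X]/g(X))^\ell$ generated by a single vector $v$ has $\K$-dimension at most $\deg g$, since the successive multiples $v, Xv, X^2 v, \dots$ become $\K$-linearly dependent after at most $\deg g$ steps; this is precisely the mechanism that already justifies the bound $k \leq ns$ for the unprojected code. Summing over the $s$ generators yields the claimed upper bound on $\dim_\K \Cp{g}$.

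The entire argument reduces to tracking generators through a ring homomorphism, so I do not anticipate a serious obstacle. The one point that requires a little care is the distinction between scalar actions: one must confirm that the $\R$-generation of $M$ really does descend to a $\K[X]/g(X)$-generation of its image $\Cp{g}$, rather than to a mere $\K$-generation by $s$ vectors, which would produce a much weaker bound.
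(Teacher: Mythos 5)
The first thing to note is that the paper itself does not prove this proposition: it states that it ``is proved in \cite[Corollary 1]{L14}'', so there is no in-paper argument to compare yours against. Your strategy --- push the $s$ module generators of $M$ through the componentwise reduction $\Pi^\ell$, observe that their images generate $\Cp{g}$ as a $\K[X]/(g(X))$-module because $\Pi$ is a surjective ring homomorphism, and bound the $\K$-dimension of each cyclic $\K[X]/(g(X))$-submodule by $\deg g$ --- is the natural one, and each individual step is correct, including the care you take to check that $\R$-generation descends to module-generation over the quotient ring rather than mere $\K$-generation. The length claim is also handled correctly.

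There is, however, a genuine mismatch in your last step. Your computation yields $\dim_\K \Cp{g}\leq s\deg g$, whereas the statement claims $\dim_\K \Cp{g}\leq s\ell$; these are different quantities and neither implies the other in general (in the paper's second parameter set, $s=1$, $\ell=2$, $\deg g=23$). Asserting that summing over the $s$ generators ``yields the claimed upper bound'' is therefore not justified as written. In fact the printed bound $s\ell$ cannot be correct: take $\ell=s=1$ and $M=\R$, whose projection is all of $\K[X]/(g(X))$ and has $\K$-dimension $\deg g>1=s\ell$ whenever $\deg g>1$. The paper's own use of the proposition (a projected code of dimension $23=s\deg P$ with $s=1$, $\ell=2$) and the analogous unprojected bound $k\leq ns=s\deg f$ make clear that the intended bound is $s\deg g$, i.e.\ the statement contains a typo. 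So your argument does establish the corrected proposition; you should state explicitly that what you obtain is $s\deg g$ and flag the discrepancy with the printed bound, rather than silently identifying the two.
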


The second point is that this operation of folding behaves nicely with respect to projecting
a quasi-cyclic code defined over an extension field $\fqm$ with respect to the rank distance
over $\fq$ when the divisor $g$ belongs to $\fq[X]$ 
\begin{proposition}[{\cite[Prop.3]{L14}}]\label{prop:projecting_rank}
Consider a cellular code $\C$ of index $\ell$ defined over $\R \eqdef \fqm[X]/(f(X))$ and let $g(X)$ be a divisor of
$f(X)$ in $\fq(X)$. Denote by $\Pi$ the associated projection operation. We have
$$
\rank(\Pi(\cv)) \leq \rank(\cv)
$$
for any $\cv \in \C$ where we view these codewords  as matrices 
in $\fq^{m \times \ell \deg f}$ or in $\fq^{m \times \ell \deg g}$ by taking the matrix form of
these codewords as defined in 
Section \ref{sec:generalities}.
\end{proposition}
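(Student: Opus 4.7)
The plan is to trace everything back to a single algebraic fact: if $g\in\fq[X]$ divides $f\in\fq[X]$, then the map $p(X)\mapsto p(X)\bmod g(X)$ from $\fqm[X]/(f)$ to $\fqm[X]/(g)$ acts on coefficient sequences through an $\fq$-linear map. Once this is available, the rank inequality will follow from an inclusion of $\fq$-spans.

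\medskip

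First I would unpack what the rank means under the identifications already in place. Writing $\cv_i=\sum_{j=0}^{\deg f -1} c_{i,j}X^j$ with $c_{i,j}\in\fqm$, the expansion $\psi^\ell(\cv)\in\fqm^{\ell\deg f}$ is the list $(c_{i,j})_{i,j}$, and its matrix form in $\fq^{m\times\ell\deg f}$ is obtained by writing each $c_{i,j}$ in a fixed $\fq$-basis of $\fqm$. As recalled in Section \ref{sec:generalities}, the rank of this matrix equals $\dim_{\fq}V$, where $V\eqdef\mathrm{Span}_{\fq}\{c_{i,j}:1\le i\le \ell,\ 0\le j<\deg f\}$. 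The same recipe applied to $\Pi(\cv)$, with $\deg f$ replaced by $\deg g$, gives $\rank(\Pi(\cv))=\dim_{\fq}V'$, where $V'$ is the $\fq$-span of the coefficients $c'_{i,k}$ appearing in $\Pi(\cv_i)=\sum_k c'_{i,k}X^k$.

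\medskip

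The core step is then to exhibit a matrix $\Mv\in\fq^{\deg g\times\deg f}$, depending only on $f$ and $g$, such that for every $p(X)=\sum_j a_j X^j\in\fqm[X]/(f)$ the coefficient vector of $p\bmod g$ equals $\Mv\cdot(a_0,\dots,a_{\deg f-1})^T$. I would obtain $\Mv$ by an induction on $\deg p$ that mirrors the Euclidean division algorithm: each elementary step subtracts $a_j X^{k}g(X)$ from the current polynomial, and because $g\in\fq[X]$, every coefficient is updated by an $\fq$-multiple of $a_j$. All updates are therefore $\fq$-linear in the original $a_j$'s, and recording them columnwise produces $\Mv$. The assumption $g\mid f$ is exactly what guarantees that the map descends to $\fqm[X]/(f)$ and is well defined.

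\medskip

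With $\Mv$ in hand, the conclusion is immediate: applying it entry by entry to $\cv_1,\dots,\cv_\ell$ shows that every $c'_{i,k}$ is an $\fq$-linear combination of $c_{i,0},\dots,c_{i,\deg f-1}$, whence $V'\subseteq V$ and therefore $\rank(\Pi(\cv))=\dim_{\fq}V'\le\dim_{\fq}V=\rank(\cv)$. The only delicate point is the second step, where one must resist the reflex of doing the division over $\fqm$ and instead keep track of the fact that the coefficients of $g$ sit inside $\fq$; this is precisely where the hypothesis $g\in\fq[X]$ is indispensable. Everything else is bookkeeping.
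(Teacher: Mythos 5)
Your proof is correct. Note that the paper does not actually prove this proposition --- it is imported verbatim from \cite[Prop.~3]{L14} --- so there is no in-paper argument to compare against; your reconstruction is the natural one: $\rank(\cv)$ is the $\fq$-dimension of the span of the $\fqm$-coordinates, and reduction modulo $g\in\fq[X]$ acts on coefficient vectors by a matrix over $\fq$, so that span can only shrink. One small simplification: you do not need the inductive bookkeeping of the Euclidean division to produce $\Mv$; since $\sum_j a_j X^j \bmod g = \sum_j a_j\,(X^j \bmod g)$ and each $X^j \bmod g$ lies in $\fq[X]$, the columns of $\Mv$ are just the coefficient vectors of the $X^j \bmod g$, which exhibits the $\fq$-linearity directly.
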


Notice that this proposition can always be applied to folded codes. These two propositions
allow to search for a codeword $\cv$ of rank $w$ in a quasi-cyclic code $\C$ of index $\ell$ and length $n \ell$ defined over $\fqm$ by projecting 
it with respect to a divisor of $X^n-1$ that belongs to $\fq[X]$ (or by folding it) and 
looking for a word of rank $ \leq w$ in the projected or folded code. Roughly speaking, the first proposition 
ensures  that we are not looking 
for a word $w$ in the entire space. From the second proposition, we expect that 
as long $w$ is below the Gilbert-Varshamov bound of the folded code, the codeword
of weight $\leq w$ we will find in the projected code corresponds to the projection
of $\cv$. This allows to recover easily $\cv$.

\subsection{A first attack based on folding}
Let $\C$ be a DC-LRPC $[2k,k]$ code  of weight $d$ over $\Fqm$ obtained from a parity-check matrix 
$\Hv$.
To recover $\Hv$ it is clearly sufficient to find a codeword of rank weight $d$
in the dual $\C^\perp$ of $\C$. 
Let $\C'$ be the folding of order $1$ of $\C^\perp$. 

It is in general a $[2,1]$ code.
This folding reveals some additional information about 
the subspace $F$ of $\Fqm$ generated by the coefficients of $H$.
We namely have

\begin{proposition}\label{prop:attack1}
Let $\cv'=(c'_1,c'_2)$ be in $\C'$.
There exists  $\cv$ of weight $d$ in $\C^\perp$ such that the $\fq$-subspace
generated by the coordinates of $\cv$ contains $c'_1$ and $c'_2$.
\end{proposition}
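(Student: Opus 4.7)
The plan is to show that the folded code $\C'$ is one-dimensional over $\fqm$ with a completely explicit generator read off from the rows of $\Hv$, and then to realise the desired weight-$d$ codeword of $\C^\perp$ as an $\fqm$-multiple of a single row of $\Hv$.

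First I would analyse $\C'$. Since $\Hv$ is a full-rank parity-check matrix, its $k$ rows form an $\fqm$-basis of $\C^\perp$. The folding of order $1$ of Definition~\ref{def:folding} simply sums all coordinates within each of the two blocks and is $\fqm$-linear, so $\C'$ is spanned over $\fqm$ by the foldings of these $k$ rows. Because each block $\Hv_i$ is circulant, every row of $\Hv_i$ is a cyclic shift of the first one and hence has the same entry sum $s_i \eqdef \sum_{j=1}^{k} \Hv_i[1,j]$. All rows of $\Hv$ therefore fold to the single vector $(s_1,s_2)$, giving $\C' = \{\alpha(s_1,s_2) : \alpha \in \fqm\}$.

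Next, given $\cv' = (c'_1,c'_2) \in \C'$, I would pick $\alpha \in \fqm$ with $(c'_1,c'_2) = \alpha(s_1,s_2)$ and set $\cv \eqdef \alpha \word{h}$, where $\word{h}$ denotes the first row of $\Hv$. By construction $\cv \in \C^\perp$. The circulant structure of $\Hv_1$ and $\Hv_2$ makes the set of entries of $\word{h}$ coincide (up to repetition) with the set of entries of $\Hv$, so these entries generate over $\fq$ exactly the subspace $F$, which has dimension $d$ by the definition of an LRPC code of weight $d$. Hence $\word{h}$ has rank weight $d$, and for $\alpha \neq 0$ the entries of $\cv = \alpha \word{h}$ span $\alpha F$, still of dimension $d$ since multiplication by $\alpha$ is an $\fq$-linear bijection of $\fqm$. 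Finally $s_1, s_2 \in F$ (each being a sum of elements of $F$), so $c'_1 = \alpha s_1 \in \alpha F$ and $c'_2 = \alpha s_2 \in \alpha F$, which is exactly the required containment; the degenerate case $\alpha = 0$ is immediate since any row of $\Hv$ then qualifies.

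I do not foresee any serious obstacle. The one point worth double-checking is that the first row of $\Hv$ already carries the full $\fq$-span $F$, but this is forced by the circulant shape of the blocks. Everything else is a consequence of the $\fqm$-linearity of the folding map combined with the fact that scalar multiplication by a nonzero element of $\fqm$ is an $\fq$-linear bijection.
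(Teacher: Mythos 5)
Your proof is correct, and it takes a more explicit route than the paper's. The paper argues abstractly: after disposing of the degenerate cases, $\C'$ has dimension $1$, so for a codeword $\cv$ of weight $d$ in $\C^\perp$ with folding $\cv''$ one has $\cv' = \alpha\cv''$ for some $\alpha \in \fqm^\star$; then $\alpha\cv$ is the desired codeword, since $\Rank(\alpha\cv)=\Rank(\cv)=d$ and $c'_1,c'_2$ are sums of coordinates of $\alpha\cv$, hence lie in their $\fq$-span. You instead compute the generator of $\C'$ explicitly as $(s_1,s_2)$ via the constant row sums of the circulant blocks, and exhibit the weight-$d$ codeword concretely as a scalar multiple of the first row of $\Hv$. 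Your version buys two things: it proves, rather than assumes, that $\C^\perp$ contains a codeword of rank weight exactly $d$ (the paper simply says ``consider a codeword of weight $d$''), and it guarantees that the weight-$d$ codeword being scaled actually folds to a generator of $\C'$ --- the paper's step $\cv' = \alpha\cv''$ tacitly assumes $\cv''\neq 0$ for its arbitrarily chosen $\cv$, which your explicit identification sidesteps. What the paper's more abstract argument buys in exchange is portability: it carries over essentially verbatim to the projected codes $\Cfo{\C^\perp}{D}$ used in Step 3 of the improved attack (the paper relies on exactly this generalization), whereas your computation of $(s_1,s_2)$ from row sums is specific to folding of order $1$ and would have to be redone for a general divisor $D(X)$ of $X^k-1$.
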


\begin{proof}
If $\C'$ is the all-zero code or $\cv'=0$ the conclusion follows directly. 

Assume now that this is not the case. In this case, $\C'$ is of dimension $1$.
Consider a codeword $\cv$ of $\C^\perp$ which is of weight $d$. 
Let $\cv"$ be the folded version of $\cv$. We have in this case
$\cv' = \alpha \cv"$ for some $\alpha \in \fqm^\star$.
Note that $\cv'$ is the folded version of 
$\alpha \cv$. We observe now two things and this finishes the proof
\begin{itemize}
\item $d=\Rank(\cv)=\Rank(\alpha \cv)$
where we view these codewords as matrices in $\fq^{m \times n}$ as explained
in Section \ref{sec:generalities}.
\item $c'_1$ and $c'_2$  are in the $\fq$-subspace of $\fqm$ generated by the coordinates of $\alpha \cv$.
\end{itemize}
\end{proof}

 We can use $(c'_1,c'_2)$ in the decoding algorithm described in Section \ref{sec:transposed}. From $\cv$ we recover immediately a parity-check matrix of the form $\beta \Hv$, where $\beta \in \Fqm\priv{0}$, by building a parity-check matrix
from $\cv$ and its cyclic shifts. This gives an attack  of complexity  $\OO{k^3m^3 q^{(d-2)\lceil\frac{m}{2}\rceil}}$. 
 However for  the parameters proposed in \cite{GMRZ13,GRSZ14}, this does not 
 improve the attacks already considered there. However, this proposition together 
 with another projection of the code will lead to a feasible attack against a certain 
 parameter of \cite{GMRZ13,GRSZ14} as we now show.

\subsection{An improved attack based on folding and projecting}
To improve the attack, we search for a word of weight $d$ in a projected code.
This new attack depends  on the factorization of $X^k-1$. The length of the projected code we are interested in will be smaller than $m$ and we will use the algorithm of Subsection \ref{ss:second} instead.
The attack can be described as\\
{\bf Step 1:} 
Compute $\C'$ the folding of order $1$ of $\C^\perp$ and extract a codeword $(c'_1,c'_2)$ in it.\\
{\bf Step 2:} Compute the projected code $\Cfo{\C^\perp}{D}$ with respect to a certain divisor $D(X)$ of $X^k-1$ in 
$\fq[X]$.\\
{\bf Step 3:} Find a codeword $\cv"$ in $\Cfo{\C^\perp}{D}$ of weight $w$ such that the $\fq$ space generated by its coordinates
contains $c'_1$ and $c'_2$ by using the algorithm of Subsection \ref{ss:second}.\\
{\bf Step 4:} Let $F$ be the $\fq$-subspace of $\fqm$ generated by the coordinates of $\cv"$. Find the codeword $\cv$
in $\C^\perp$ of rank weight $w$ whose support is $F$ (meaning that the $\fq$-subspace 
generated by its coordinates should belong to $F$.)

What justifies the third step is the fact that Proposition \ref{prop:attack1} generalizes easily to 
the projected code, whereas what justifies Step 4 is the fact that it is extremely likely 
that $\cv"$ is the projection of a codeword in $\C^\perp$ of weight $d$ we are looking for. 
We recover in this case such a codeword by the process of Step 4. 
The complexity of this attack 
is dominated by the third step and is given by Theorem \ref{th:comp_transposed}.

In  \cite{GMRZ13}, some  parameters for the LRPC cryptosystem are suggested. 
They are recalled in the following table \\
\vspace{-0.25cm}
\begin{center}
\begin{tabular}{|c|c|c|c|c|c|c|}
\hline
n & k & m & q & d  & security\\
\hline
74 & 37 & 41 & 2 & 4 & 80\\
\hline
94 & 47 & 47 & 2 & 5 & 128\\
\hline
68 & 34 & 23 & $2^4$ & 4 & 100\\
\hline
\end{tabular}
\end{center}
 In each case the factorization of $X^k-1$ in $\Fq[X]$  is given by\\
(i)  $X^{37} - 1 = (X-1)\sum_{i=0}^{36} X^i$\\
(ii) $X^{47}-1 = (X-1)PQ$ with $\deg P = \deg Q = 23$\\
(iii) $X^{34}-1 = (X-1)^2(P_1\dots P_8)^2$ with $\deg P_i = 2$, for all $i \in \intvl{1;8}$.\\

In the first case, the polynomial $X^{37} -1$ has only two divisors, so we can only use the first attack.
In the second case, we can choose $D = P$ or $Q$ to obtain a folded code of dimension 23. According to Theorem \ref{th:comp_transposed},  the  complexity of the attack is $\OO{23^3 47^3 2^{3\times 22)}} \approx 2^{96.2}$, that is  a gain around $2^{32}$ compared to the best attack considered in \cite{GMRZ13} and about $2^{20}$ compared to the best attack found in \cite[Subsec. 3.2]{L14}.

The third case is the most interesting. Here we can freely choose the dimension of the projected code. Keep in mind that we want the Gilbert-Varshamov bound greater than $d$  which is the case  when the dimension $k"$ of the projected code is $\geqslant 4$. We choose $k"=4$ and we have in this case an attack of 
complexity $2^{43.6}$ which clearly leads to a feasible attack.

In \cite{GRSZ14}, a new set of parameters is proposed, as follows :\\
\vspace{-0.25cm}
\begin{center}
\begin{tabular}{|c|c|c|c|c|c|c|}
\hline
n & k & m & q & d  & security\\
\hline
82 & 41 & 41 & 2 & 5 & 80\\
\hline
106 & 53 & 53 & 2 & 6 & 128\\
\hline
74 & 37 & 23 & $2^4$ & 4 & 100\\
\hline
\end{tabular}
\end{center}
In each case the factorization of $X^k-1$ in $\Fq[X]$  is given by\\
(i)  $X^{41} - 1 = (X-1)PQ$ with $\deg P = \deg Q = 20$\\
(ii) $X^{53}-1 = (X-1)\displaystyle{\sum_{i=0}^{52}X^i}$\\
(iii) $X^{37}-1 = (X-1) P_1\dots P_4$ with $\deg P_i = 9$, for all $i \in \intvl{1;4}$.\\

The first case allow a non-trivial projection but it is not sufficient to obtain a better complexity than 80. In the second case, we can only use the folding attack, and its complexity is greater than 128.

In the third case, we can choose a projected code of dimension $9$ and we have an attack of complexity $2^{87,1}$, that is a gain around $2^{13}$.

As we can see, it is crucial to choose $k$ such that $X^k-1$ has the minimum of factors in $\Fq[X]$, it can always be factorisable in $(X-1)(\displaystyle{\sum_{i=0}^{k-1} X^i})$ so one have to choose $\displaystyle{\sum_{i=0}^{k-1} X^i}$ irreducible in $\Fq$. This implies $k$ prime but it is not sufficient, as the third case of the parameters proposed in \cite{GRSZ14} proves it.

\bibliography{codecrypto}
\bibliographystyle{plain}

\end{document}